\newcommand\norm[1]{\left\lVert#1\right\rVert}
\newtheorem{prop}{Proposition}
\definecolor{Mypink}{RGB}{255,0,255}
\definecolor{Myorange}{RGB}{255,102,0}
\definecolor{Mygreen}{RGB}{0,153,0}
\definecolor{Myblue}{RGB}{0,0,255}
\newcolumntype{d}[1]{D..{#1}}
\DeclareMathAlphabet\mathbfcal{OMS}{cmsy}{b}{n}
\begin{document}

\title{Space-Time Block Coded Spatial Modulation for  Indoor Visible Light Communications}
\renewcommand\Authfont{\fontsize{12}{14.4}\selectfont}

\author{Shimaa~Naser,~Lina~Bariah,~\IEEEmembership{Senior Member,~IEEE},~Sami~Muhaidat, \IEEEmembership{Senior Member, IEEE},  Mahmoud Al-Qutayri, \IEEEmembership{Senior Member, IEEE},   Murat Uysal, \IEEEmembership{Fellow, IEEE}, and  Paschalis~C.~Sofotasios,~\IEEEmembership{Senior~Member,~IEEE}
\thanks{ This work was supported in part by Khalifa University
under Grant KU/FSU-8474000122 and Grant KU/RC1-C2PS-T2/8474000137.
\textit{Corresponding Author: Paschalis C. Sofotasios}.}
\thanks{S. Naser and L. Bariah are with the Center for Cyber-Physical Systems, Department of Electrical Engineering and Computer Science, Khalifa University, Abu Dhabi 127788, UAE, (e-mails: 100049402@ku.ac.ae; lina.bariah@ieee.org.)}

\thanks{S. Muhaidat is with the Center for Cyber-Physical Systems, Department of Electrical Engineering and Computer Science, Khalifa University, Abu Dhabi 127788, UAE, and also with the Department of Systems and Computer Engineering, Carleton University, Ottawa, ON K1S 5B6, Canada, (e-mail: muhaidat@ieee.org).}

\thanks{M. Al-Qutayri is with the Systems-on-Chip (SoC) Center, Department of Electrical Engineering and Computer Science, Khalifa University, Abu Dhabi 127788, UAE, (e-mail: mahmoud.alqutayri@ku.ac.ae).}
 
\thanks{M. Uysal is with the Department of Electrical and Electronics Engineering, Ozyegin University, 34794 Istanbul, Turkey
(e-mail: murat.uysal@ozyegin.edu.tr).}

\thanks{P. C. Sofotasios is with the Center for Cyber-Physical Systems, Department of Electrical Engineering and Computer Science, Khalifa University, Abu Dhabi 127788,  UAE, and also with the Department of Electrical Engineering, Tampere University, Tampere 33014, Finland, (e-mail: p.sofotasios@ieee.org.)}
}

\maketitle
\begin{abstract}

Visible light communication (VLC) has been recognized as a  promising technology for handling the continuously  increasing quality of service and connectivity requirements in modern wireless communications, particularly in indoor  scenarios. 
In this context, the present work considers the integration of two distinct modulation schemes, namely spatial modulation (SM) with space time block codes (STBCs), aiming at improving the overall VLC system reliability.  
Based on this and in order to further enhance the achievable transmission  data rate, we   integrate  quasi-orthogonal STBC (QOSTBC) with SM, since relaxing  the orthogonality condition of   OSTBC ultimately provides a higher coding rate. 
Then, we  generalize the developed results to any number of active light-emitting diodes (LEDs) and any $M$-ary pulse amplitude modulation size. 
Furthermore, we derive a tight and tractable upper bound for the corresponding bit error rate (BER) by considering a simple two-step decoding procedure to detect the indices of the transmitting LEDs and then  decode the signal domain symbols.  
Notably, the obtained results demonstrate that  QOSTBC with SM  enhances the achievable BER  compared to SM with repetition coding (RC-SM). 
Finally, we compare STBC-SM with both multiple active SM (MASM) and RC-SM in terms of the achievable BER and  overall data rate,   which further justifies  the usefulness of the proposed scheme. 
\end{abstract}

\begin{IEEEkeywords}
MIMO, VLC, spatial modulation, repetition coding, SSK, STBCs.  
\end{IEEEkeywords}
\vspace*{-10pt}

\section{introduction}
\vspace*{-1pt}
\IEEEPARstart{F}uture sixth generation (6G) and beyond mobile communications are envisioned to witness a plethora of novel data-demanding applications that will be typically characterized by particularly stringent operational and quality of service requirements. 
This calls for the development of disruptive technologies that will be capable of catering  for all anticipated demanding requirements, such as substantially higher data rates, ultra-low end-to-end latency, massive scale connectivity, increased robustness and expanded capacity, all under the umbrella of secure and energy efficient green communications \cite{6G,9079587}. A key approach for meeting these requirements is  through the radical shift from   conventional radio frequency (RF) based communications, to more wideband communications using less occupied or unoccupied bands of the  frequency spectrum. 

Based on the above, recent advancements in solid state optical technology have facilitated the emergence of visible light communications (VLC), which is considered a promising breakthrough technology candidate for complementing RF wireless communications. In addition, relatively low deployment cost and ease of implementation can be realized in VLC systems by means of exploiting light emitting diodes (LEDs), which are primarily intended for illumination, in order to perform wireless transmission through intensity modulation and direct detection (IM/DD). 
Yet, despite the superior features of VLC, the performance of such systems is constrained by several factors, such as limited modulation bandwidth and restricted LED peak optical power, as well as by the imposed   positive and real valued nature of  transmitted signals \cite{Jovicic2013}. Motivated by this, extensive research efforts have been devoted in order to enhance the performance of VLC systems by means of developing efficient optical-based modulation, coding, equalization, VLC cooperative communications, and multiple access (MA) schemes\cite{fath2013,Hussein,MA_VLC}. \textcolor{black}{Additionally, the use of burst-mode receivers is particularly suitable for optical multi-access networks, which  are based on packet transmission, as they are capable of enhancing  the transmission data  \cite{Singh2010}. Finally, due to its  high  spectrum  efficiency,  simplified  and  efficient  implementation, as well as  inter-symbol interference mitigation capabilities, orthogonal frequency division multiplexing (OFDM) has been also investigated in the context of VLC. The work in \cite{singh} has considered the performance evaluation of
different types of quadrature-amplitude modulation (QAM) and phase-shift keying (PSK) modulation with three different channel estimation methods in OFDM system aiming to improve the overall throughput performance.  }

It is also recalled  that multiple-input multiple output (MIMO) transmission techniques have been comprehensively studied in RF systems and they have been extensively shown to exhibit  superior performance in terms of system capacity and reliability, compared to conventional single input single output (SISO) systems. Likewise, two MIMO strategies have been also proposed in the literature for VLC systems, namely, space time block codes (STBCs) and spatial multiplexing (SMP). 
On the one hand, SMP schemes are applied in VLC systems in order to accommodate the stringent demands for high data rate communications, by utilizing multiple LEDs in order to transmit different signals simultaneously. However, the simultaneous transmission from all LEDs results in a high inter-channel interference (ICI) and increased detection complexity  \cite{Elgala,Wolniansky,khan}. On the other hand, STBCs schemes are privileged as they achieve an improved reliability, high data rate, ease of implementation, and reduced decoding complexity features \cite{Tarokh, 9252154}. Existing results in the literature quantified the performance  achieved enhancement by applying STBC   over SISO systems in indoor VLC networks \cite{ntogari,Amano,Hwang}. \textcolor{black}{For instance, the work in \cite{Biagi2013} has jointly investigated  STBCs with pulse position  modulation (PPM). Results demonstrated the ability of the proposed scheme to achieve data rates of the order of hundreds of Mb/s in indoor VLC systems, while maintaining good link reliability. Furthermore, the authors in \cite{biagi2020} proposed a quasi-trace-orthogonal-STBC, where the transmitted codewords follow the quasi-trace orthogonal property to provide an enhanced spectral efficiency. Corresponding simulations were experimentally tested for a $4 \times4$ MIMO L-PPM VLC system. The proposed scheme showed an enhanced spectral efficiency performance of up to four times   that of the trace-orthogonal STBC when 8-PPM was used.    }\\
\indent Meanwhile, index modulation is considered  an efficient modulation scheme for enhancing  the spectral and energy efficiency of VLC systems while maintaining a reduced  hardware complexity  \cite{IndexModulation,Mao,Guo2020}. The key difference in index modulation schemes compared to conventional modulation schemes is that additional data can be conveyed over the indices of the major blocks of the communication system through  on/off keying mechanism. 
 These are, for example, the indices of the LEDs, frequency subcarriers, time slots, or a combination of them. 
Based on this, indexing in the spatial domain has been thoroughly investigated in the field of VLC networks \cite{IndexModulation}. Specifically, different forms of spatial index modulation have been proposed and investigated in the open literature, such as,  space shift keying (SSK), generalized space shift keying (GSSK), and spatial modulation (SM) \cite{7405284}.  SSK is the simplest form of spatial domain index modulation where a single LED out of $N_{t}$ LEDs is activated and the data is conveyed over the index of the activated LED. Yet, although SSK has the advantages of providing an interference free communication and simple receiver design, a large number of transmit LEDs is required to achieve higher modulation orders, and hence, high spectral efficiency. 
So, in the generalized case, i.e. in GSSK, $N_{a}$ LEDs are activated simultaneously to boost the spectral efficiency  at the expense of an incurred spatial interference \cite{Gssk}.
In this context, various reported contributions  investigated the performance capability of GSSK in VLC systems. For instance, \cite{Zhang2018} derived a closed form expression for the average symbol error rate (ASER) for direct-code GSSK system and showed that with a low complexity power allocation mechanism, the ASER and spectral efficiency  can be improved significantly. 
Moreover, the authors in
\cite{Wang2018} investigated the   physical layer security of a  GSSK-VLC system and proposed an optimal LED pattern selection algorithm to enhance  the achievable  secrecy performance. A similar analysis was carried out in  \cite{su2021}, wherein a novel spatial constellation design technique was proposed based on GSSK.

It is recalled that SM was firstly introduced by Mesleh $et$ $al.$ in \cite{Mesleh}, as an efficient scheme to reduce the ICI introduced by SMP and overcome the  spectral efficiency limitations of SSK and GSSK schemes. To achieve this, the principle of SM  is based on conveying information in both the signal and spatial domains.  \textcolor{black}{ A detailed comparison between SMP, repetition coding (RC)  and SM was carried out   in \cite{fath2013}. There, it was demonstrated that SM achieves compromised performance between spectral efficiency, error rate, and complexity. 
Likewise, the authors in \cite{Wang2019}  investigated the secrecy rate performance of SM and proposed  channel adaptive selection and a greedy selection schemes for choosing the active transmitter in order to enhance the secrecy rate performance. 
In addition to this, it is also widely known that the performance of SM   in VLC is highly affected by the incurred channel correlation. To overcome this issue, the authors in \cite{Tran2019} proposed an alteration of the orientation of the photo-detector (PD), while an LED grouping based SM method was reported in \cite{gao2020}. Subsequently, the authors in   \cite{Zheng2020} proposed an adaptive SM scheme that balances the trade-off between spectral efficiency and link robustness. In this scheme, the modulation size was dynamically changed based on the channel cumulative distribution function (CDF) of the user. Furthermore, additional studies on adaptive SM were also considered in \cite{Wang_access_2018} and \cite{Wang2018_photo}. Likewise, the integration of SM and non-orthogonal multiple access has been investigated in multi-user MIMO VLC networks as a means of enhancing the overall  spectral efficiency of the system \cite{Yang2019,Amna2020,wu2020}.   }\\ \indent
Following the same concept as in  SSK, a single LED in SM is activated. However, the activated LED additionally sends a positive real-valued symbol, which ultimately results in an enhanced spectral efficiency compared to SSK modulation. In this context and in order to provide a more flexible system design, the  authors in \cite{Lang2015} proposed the activation of $N_{a}$ LEDs out of the total $N_{t}$, while RC was applied in order to transmit the same $M$-ary  symbol through the active LEDs. However, this scheme is still limited in terms of the achievable spectral efficiency since the same symbol is transmitted from all active LEDs. Furthermore, the error rate performance of RC-SM is practically degraded   at higher spectral efficiency values since   higher $M$-ary modulation orders are required in order to maintain a certain high spectral efficiency under the same system setup.  \\ \indent
Multiple active SM (MASM) is a more generalized version of conventional SM that highly enhances the  spectral efficiency by conveying more information in both spatial and signal domains \cite{Rajesh,Li,Hussein}. It is noted that MASM relies on the activation of $N_{a}$ LEDs out of $N_{t}$, whilst  multiple distinct real non-negative $M$-ary symbols are transmitted from each active LED. It was shown in \cite{datta,Nahhal2021} that MASM achieves high spectral efficiency compared to other spatial index modulation schemes. However, its major drawback is the degraded bit error rate (BER) performance, particularly for a large number of activated LEDs and high modulation orders. 
The mentioned BER performance degradation results in a deterioration in the VLC system's throughput. In order to tackle this issue, a bit-to-symbol mapping based on the corresponding  Euclidean distance (ED) was proposed for MASM in \cite{Tran2019_photon}, where  it was shown  that the proposed MASM with ED-mapping outperforms  conventional mapping for MASM. Additionally, the authors in   \cite{Xiao2019} utilized multi-color LED (QLEDs) as transmitters in a MASM system. In their scheme, the optimal QLED pattern was selected according to
the chromaticity property of the LEDs. However, this   scheme exhibits increased complexity, so   signal design schemes with reduced complexity are highly essential. Moreover, constellation design algorithms for single-mode and dual-mode joint MASM were proposed in \cite{kumar2018} for the case of highly correlated VLC channels. On the contrary, an iterative combinatorial symbol design algorithm was proposed in \cite{curry2018}. In this context, the achievable symbol error rate (SER) performance was enhanced for a random symbol set by iteratively adding and removing symbols. 
Finally, the integration of MASM with orthogonal frequency division multiplexing scheme was   considered in   \cite{kumar2019} and \cite{hussein2019}.
Based on this and according to the above discussion, there is an urgent need for the design of efficient modulation schemes in order to ultimately achieve an adequate balance between spectral efficiency enhancement, BER performance, and involved receiver complexity. 
\subsection{Contribution }

As already mentioned, several  MIMO
transmission techniques have been proposed in the open literature to compensate for the spectral loss introduced by
LEDs and the constraints imposed by
intensity modulation direct detection (IM/DD).  
In addition, the MASM concept was recently proposed and was found capable of increasing  the
transmission data rate by means of conveying information in
both signal and spatial domains. However, the main drawback of this scheme is
the degraded error rate performance at higher spectral efficiency
values.
Based on this, the present contribution proposes  space time block coded-spatial modulation (STBC-SM) for indoor VLC systems. It is worth mentioning  here that the integration of STBC with SM is a particularly promising modulation scheme because it can achieve a significant performance enhancement over MASM   by exploiting the advantages of both SM and STBC, while maintaining high energy efficiency and reduced complexity \cite{Xiang,Biagi}. However, an STBC-SM based scenario in the context of VLC has not been investigated in the literature so far. Therefore, the core aim of our work is to propose a comprehensive framework for the investigation of STBC-SM and for quantifying the advantages of integrating SM and STBC scheme in indoor VLC environments. In more details. the main contributions of this paper are summarized as   follows:
\begin{itemize}
    \item We propose an STBC-SM scheme for indoor VLC systems, which offers a trade-off between  spectral efficiency, reliability, and receiver complexity.
    \item We generalize Alamouti STBC-SM  to an arbitrary number of active LEDs. Moreover, we provide a general design for the $M$-ary pulse amplitude modulation (PAM) symbols and their complements. 
    \item We propose the integration of quasi-orthogonal STBCs (QOSTBCs) with SM to further enhance the system spectral efficiency.  
    \item We purpose a new receiver design based on two-step maximum likelihood (ML) decoding process to reduce the complexity associated with the conventional joint ML receivers. 
    \item To corroborate the performance enhancement achieved by STBC-SM, we consider the proposed two-step ML detector to derive the pairwise error probability (PEP) of the active LEDs indices, which is then utilized to derive a union bound for the probability of  incorrect index detection.

    \item Capitalizing on the  derived union bound, we evaluate the probability of incorrect detection of the transmitted symbols, by deriving a tight and tractable bound for the corresponding BER. 

    \item To compare their   spectral efficiency performance, we provide an analysis for the achievable rate of MASM and STBC-SM schemes.
    
    \item Finally, we present extensive analytic and simulation results in order to validate the derived BER bound expression and  to corroborate the BER superiority of the proposed scheme compared to the MASM counterpart as well as the achieved throughput enhancement at high $M$-ary PAM.
    \end{itemize}

To the best of the authors' knowledge, the offered results have not been previously reported in the open technical literature. 
The remainder of this paper is organized as follows: Section II  presents the considered  system and channel models, whereas section III proposes the generalized STBCs and the QOSTBC for VLC systems. Section IV  presents the proposed ML decoding along with  the required steps for the derivation of the BER bound. Capitalizing on this, the corresponding  analytic and simulation results are discussed in section V. Finally,  the paper is concluded in section VI with some useful comments and suggestions.  \\
\textit{Notations:} Throughout the manuscript, unless mentioned otherwise, boldface uppercase and lowercase represent matrices and vectors, respectively. Also, $(\cdot )^{T}$, $|\cdot|$, and $\norm{\cdot}$ denote the transpose, the absolute value, and the norm operations, respectively whereas  \textbf{\textit{I}} denotes the identity matrix. The notations  ${N \choose k}$ and $\lfloor{x}\rfloor_{2^{p}}$  represent respectively the binomial coefficient and the largest integer less than or equal x, that is integer power of 2. Finally, $\mathcal{N}(0,\sigma^2)$ denotes a real-valued Gaussian distribution with zero mean value and variance $\sigma^{2}$. 

\begin{figure}[t]
\centering
\includegraphics[width=250pt, height = 5cm]{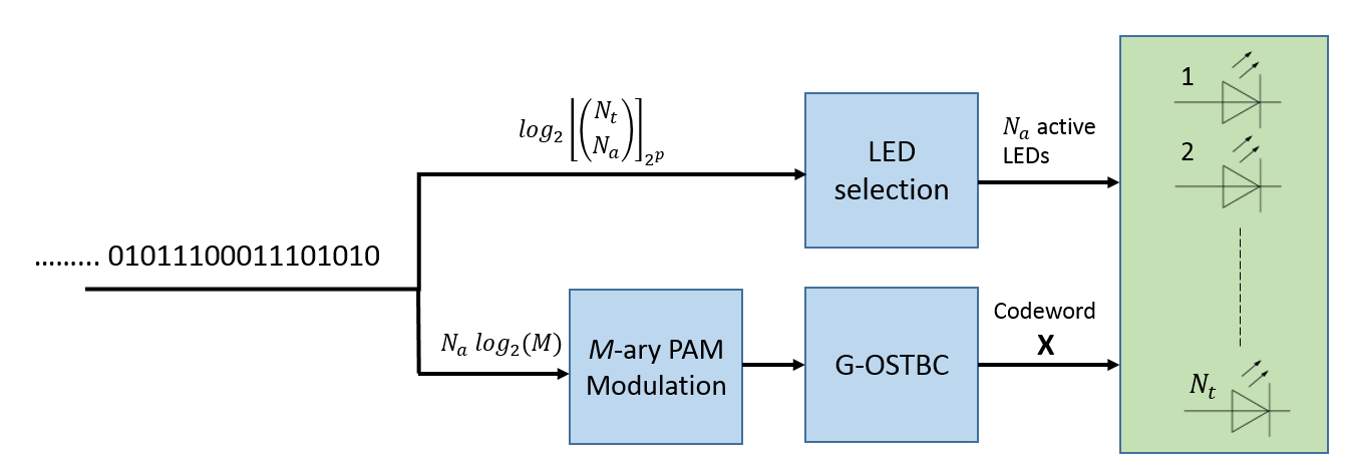}
\caption{ Generalized STBC-SM transmitter}
\label{Fig:STBC-SM}
\end{figure} 

\section{system and channel models}
We consider an indoor VLC multiple-input single-output (MISO) downlink system which consists of $N_{t}$ transmit LEDs, a single PD user ($N_r=1$), and $N_a$ active LEDs. Without loss of generality, the dimensions of the considered room in the present scenario are  $5\times5\times4\;m^3$. The transmitted bit stream is divided into two parts; the first part consists of  $\log _2 (\lfloor{{N_t \choose N_a}}\rfloor_{2^{p}})$ bits, \textcolor{black}{where $\lfloor{{N_t \choose N_a}}\rfloor_{2^{p}}$ denotes the number of combinations of $N_t$ LEDs taken $N_a$ at a time, which is less than  or equal to $N_t \choose N_a$ and  is power of 2.} Those bits are  used to select the LEDs’ indices that will be activated. On the contrary, the other $(R N_a) \log_2(M)$ bits are modulated using $M$-ary PAM modulation and are then encoded using a STBC with coding rate of $R$. The generalized STBC-SM transmitter is depicted in Fig. \ref{Fig:STBC-SM}. Therefore, the received signal can be written as follows
\begin{equation}
\mathbf{y}=  \frac{\eta}{N_a}  \mathbf{X\;h_\ell}+\mathbf{z},
\label{RX}
\end{equation}
where $\eta$ is the PD responsivity and \textcolor{black}{$\mathbf{X}$ $\in \mathbb{R}^{T \times N_a}$ is the transmitted codeword, where  $T$ denotes the number of time slots.} It is worth mentioning that the normalization by $N_a$ is performed in order to guarantee fixed illumination in the room. Moreover,
\textcolor{black}{$\mathbf{z} \in \mathbb{R}^{T \times 1}$} represents the additive white Gaussian noise (AWGN) with zero mean and variance $\sigma^2 = \sigma^2_{shot}+\sigma^2_{th}$, where $ \sigma^2_{shot}$ and $\sigma^2_{th}$ denote the variances of the shot noise and thermal noise, respectively. A detailed noise variance calculation was reported in  \cite{Ghassem2019}. Additionally, \textcolor{black}{ ${\mathbf{h}_\ell} \in \mathbb{R}^{{N_a\times1}}$} denotes the channel gain vector between the $N_a$ active LEDs and the PD,  in which only the line of sight (LoS) component is considered. To this effect, each component of ${\mathbf{h}_\ell}$ can be  expressed as {\cite{Komine2004}}
\begin{equation}
\label{eq:channel}
\small 
 h_{i}= \left\{\begin{matrix}
\frac{A}{d^2_{i}} R_{o}(\varphi_{i}) T_{s}(\phi_{i}) g(\phi_{i}) \cos(\phi_{i}),  & 0 \leqslant \phi_{i} \leqslant \phi_{c} \\
0, & \text{otherwise}
\end{matrix}\right.
\end{equation}
where $A$ denotes the PD area, $d_{i}$ is the distance between the $i$th active LED and the PD, $\varphi_{i}$ is the angle of transmission from the $i$th active LED to the PD, $\phi_{i}$ is the incident angle with respect to the receiver, and $\phi_{c}$ is the field of view (FoV) of the PD. Moreover, $T_{s}(\phi_{i})$ and  $g(\phi_{i} )$ denote the gains of the optical filter and concentrator, respectively. The  $g(\phi_{i} )$ term can be expressed as follows
\begin{equation}
\label{eq:g}
g(\phi_{i})= \left\{\begin{matrix}
\frac{n^2}{\sin^2({\phi_{c}})},  & 0 \leqslant \phi_{i} \leqslant \phi_{c}\\
0, &  \phi_{i}>\phi_{c}
\end{matrix}\right.
\end{equation}
where $n$ represents the refractive index, and $R_{o}(\varphi_{i})$ denotes the Lambertian radiant intensity. \textcolor{black}{ Further details about $R_{o}(\varphi_{i})$ can be found in \cite{Ghassem2019}. }
\vspace{-5pt}
\section {generalized stbc for vlc Systems} 
It is recalled that STBCs have attracted considerable  attention because of their demonstrated  implementation and decoding simplicity, which renders them particularly  capable of exploiting the distinct potentials of MIMO systems. In this context, orthogonal STBCs (OSTBCs) constitute special cases of STBCs which allow single-symbol simplified linear decoding. However, it is noted that OSTBCs in VLC systems can achieve a maximum coding rate of one. Therefore, high coding rate STBCs are required in order to provide higher transmission data rates.  
Thus, in this section we first generalize Alamouti STBC to an arbitrary $M$-ary PAM size.  Subsequently, we consider the general case of an arbitrary number of active LEDs in order to develop the codeword for any optical OSTBCs. Finally, we propose a rate-2 QOSTBC that will enable an enhancement on the achievable transmission data rate.

To that end, starting with the well-known Alamouti STBC, the transmitted codeword from two active LEDs can be expressed as
\begin{gather}
\mathbf{X}_2 = \bigg[
\begin{array} {c c}
 x_1 & x_2 \\
\bar{x_2} & x_1
\end{array}
\bigg ],
\end{gather}
where $x_1$ and $x_2$ are real and positive PAM symbols. Moreover, $\bar{x_i}$ denotes the complement of $x_i$, which  for the special case of on-off keying (OOK) is calculated as $\bar{x_i}=I_p-x_i$, where $I_p$ is the mean optical power. Also,  $x_i$ and $\bar{x}_i$ can be generalized into any PAM size $M$, so the intensity levels for the case of $M$-ary PAM  are defined as follows 
\begin{equation}
\label{eq:PAM}
x_i \in \frac{2 i I_p}{M-1}  , \; i=0,1,...,(M-1).
\end{equation}

It is noted here  that when the LEDs transmit bits corresponding to the first level (i.e. $x_i=0$), they will be incorrectly decoded as inactive. Therefore, in order to overcome this issue and allow the receiver to distinguish the  active LEDs, the intensity levels are modified as follows
\begin{equation}
\label{xi}
x_i \in \frac{2 i I_p}{M+1}  , \; i=1,2,...,M
\end{equation}
and the complement of the symbol $x_i$ is 

\begin{eqnarray}
\label{xi_bar}
\bar{x_i}&=&-x_i + \frac{2 I_p}{M+1} +\frac{2MI_p}{M+1}  \\\nonumber \\\nonumber
&=& -x_i + 2I_p ,  \;   \;  i=1,2,...,M.
\end{eqnarray}
\subsection{Optical Orthogonal STBCs}
Alamouti STBC can be generalized to produce an $N_a\times N_a$ OSTBC of rate $R=1$, which can be expressed as 
\begin{gather}
\mathbf{X_{N_a}} = \bigg[
  \begin{array} {c c}
     \mathbf{X}_{\frac{N_a}{2}}^{1}& \mathbf{X}^2_{\frac{N_a}{2}}\\ 
     (\mathbf{\bar{X}}_{\frac{N_a}{2}}^2)^T &( \mathbf{X}_{\frac{N_a}{2}}^1)^T \mathbf{X_{N_a}} 
   \end{array} \bigg ]  \; \textcolor{black}{\in \mathbb{R}^{N_a\times N_a}},
\end{gather}
where  {$ \mathbf{X}_{\frac{N_a}{2}}^{1}$ }and {$\mathbf{X}_{\frac{N_a}{2}}^{2}$ }are the {(${N_a}/{2}\times {N_a}/{2}$)} OSTBC matrices for the first ${N_a/2}$ and the last ${N_a/2}$ symbols, respectively. The process starts by dividing the $N_a\times N_a$ matrix into blocks of size $N_a/2\times N_a/2$. Then, each block is  also divided into new sub-blocks until the smallest sub-block, i.e the Alamouti codeword is reached.
In the present analysis, and without loss of generality, we consider  the special case of an  $8\times8$ OSTBC matrix.
To this end, we fist start by dividing the  $8 \times 8$ OSTBC matrix into blocks of size  $4 \times 4$ as follows:
\begin{gather}
\mathbf{X_8} = \bigg[
  \begin{array} {c c}
     \mathbf{X}_4^{1}& \mathbf{X}^2_4\\ 
     ({\mathbf{\bar{X}}_4^2})^T & (\mathbf{X}_4^1)^T
   \end{array} \bigg ],
   \label{8STBC}
\end{gather}
where $\mathbf{X}_4^{1}$ and $\mathbf{X}^{2}_{4}$ are  $4 \times 4$ codewords for ($x_1, x_2, x_3, x_4$) and ($x_5,x_6,x_7,x_8$),  respectively. Then, each block is divided into a smaller $2 \times 2$ OSTBC sub-blocks, which yields \par

\begin{gather}
\mathbf{X^1_4} = \Bigg[
  \begin{array}{c c} 
  \mathbf{X}_2^{1}& \mathbf{X}^{2}_{2}\\ 
({\mathbf{\bar{X}}_2^{2}})^T & (\mathbf{X}_2^{1})^T
   \end{array}  \Bigg]
\;
, \mathbf{X^2_4} =\Bigg[
  \begin{array}{c c}
\mathbf{X}_2^{3}& \mathbf{X}^{4}_{2}\\ 
({\mathbf{\bar{X}}_2^{4}})^T & (\mathbf{X}_2^{3})^T
\end{array} \Bigg],
\label{4STBC}
\end{gather}
where  $\mathbf{X}_2^{1}$ and $\mathbf{X}^{2}_{2}$ are Alamouti codewords for ($x_1, x_2$) and ($x_3, x_4$), respectively. On the contrary, $\mathbf{X}_2^{3}$ and $\mathbf{X}^{4}_{2}$ are Alamouti codewords for ($x_5, x_6$) and ($x_7, x_8$), respectively.  
Meanwhile, OSTBC codes for real symbols of rate $R=1$ for $N_a=$3, 5, 6, and 7 can be obtained by eliminating some of the columns of the square OSTBC matrices defined earlier for $N_a =4$ and 8 in (\ref{4STBC}) and (\ref{8STBC}), respectively. For example, a $4\times3$ OSTBC can be generated by eliminating the last column of the $4\times4$ OSTBC, yielding 
\begin{gather}
\mathbf{X_3}=
  \begin{bmatrix}
   x_1 & x_2 & x_3 	\\
\bar{x_2} & x_1& \bar{x_4}	\\
\bar{x_3} & x_4 & x_1 	\\
\bar{x_4} & \bar{x_3} & x_2 
   \end{bmatrix} .
\end{gather}
Importantly, the generated codewords satisfy the orthogonality condition $\mathbf{X_i^TX_i}=\mathbf{I}$. Therefore, they can be exploited for   integrating STBC with SM in order to simplify receiver design and to allow the transmission of  higher order modulation, as presented in detail in the following section. 
\vspace{-5pt}
\subsection{Optical Quasi-Orthogonal STBCs}

Despite their simplified ML detection, OSTBCs in VLC can achieve maximum coding rate of 1. Therefore, in order to provide higher rates the orthogonality is relaxed in order to transmit more symbols in each time slot through developing QOSTBCs.  It is worth mentioning that QOSTBCs allow a trade-off between higher rates and decoding complexity; this is because parallel ML detectors are needed to jointly decode pairs (or more) of the transmitted symbols  instead of single-symbol decoding as in OSTBCs. In what follows, we provide an example of a QOSTBC that achieves a rate $R=2$, which is then utilized to be integrated with SM. 

\subsubsection{ Rate-2 QOSTBC}

  A rate-2 QOSTBC  for real symbols and two transmit LEDs is expressed as     
\begin{gather}
\mathbf{X}=
  \begin{bmatrix}
{  a x_1+b x_3} &{ a x_2+b x_4}\\
{c \bar{x}_2+d \bar{x}_4} & {c x_1+d x_3 }	\\
   \end{bmatrix} .
   \label{QOSTBC_matrix}
\end{gather}

\begin{figure*}
\centering
\begin{equation}
\mathbf{H}^T_{eff}\mathbf{H}_{eff}=
\begin{bmatrix}
{ (a h_1)^2 +(c h_2)^2}& {0} &{abh^2_1 +cdh^2_2}& {0} \\ \\ {0} & {(ch_1)^2 +(ah_2)^2}& {0} &  {cdh^2_1+abh^2_2} \\ \\
   {abh^2_1 +cdh^2_2 }& {0} & {(bh_1)^2+(dh_2)^2} & {0}\\ \\
{ 0 }& cdh^2_1+abh^2_2  & 0 & (dh_1)^2+(bh_2)^2 \\
    \end{bmatrix}  
    \label{qostbc}.
    \end{equation}
    \hrulefill
\end{figure*}
Of note, equation  (\ref{QOSTBC_matrix}) is   simply a linear combination of two optical Alamouti STBCs. Moreover, the parameters $a, b, c, $ and $d$ need to be carefully chosen in order to maximize the coding gain, while they also satisfy a fixed average transmitted optical power, i.e.,
$a+b= c+d =1$. 
In order to simplify the detection process, we can choose $a=c$ and $b=d$. Therefore, the multiplication of the effective channel gain matrix and its transpose can be expressed as in (\ref{qostbc}),  at the top of the next page. It is also worth noting that the values of the design parameters $a, b,c,$ and $d$ need to be optimized in order to obtain best detection performance.  
Also, it can be noticed from (\ref{qostbc}) that the joint ML decoding of the symbols $x_1$ and $x_3$ will be independent of the other two symbols, $x_2$ and $x_4$.
\section{Proposed ML Decoding for STBC-SM Scheme }

In this section, we investigate the integration of STBC with SM in order to enhance the achievable  VLC system performance in terms of reliability, throughput and complexity. 
To this end and by assuming that ML detection is utilized at the receiver side,  the  detector performs joint detection to deduce the received signals over the space and signal domains. In particular, the receiver will jointly detect the LED index and the transmitted symbol according to the following criterion 
\begin{equation}
[\hat{\ell},\hat{\mathbf{X}}]=\arg\min_{\tilde{\ell}\in\mathbb{\ell},\tilde{X}\in\mathbb{X}} \left\Vert\mathbf{y}- \frac{\eta}{N_a}\tilde{\mathbf{X}}{\mathbf{h_{\tilde\ell}}}\right\Vert^2,
\end{equation}
where $\ell$ denotes the index of the transmitting LEDs and $\mathbf{X}$ is the transmitted codeword. It also recalled here that  joint ML detection requires the search over all ${M^{(R\times N_a)} }\times \lfloor{{N_t \choose N_a}}\rfloor_{2^{p}}$ combinations, which results in a significantly increased receiver complexity. 
Yet, the orthogonal property of the generated OSTBC codewords and being of rate $R=1$ allows the symbols $x_i$'s to be decoded independently. This ultimately reduces the complexity to $M\times  N_a  \times \lfloor{{N_t \choose N_a}}\rfloor_{2^{p}}$. 
To this effect and  aiming at reducing receiver complexity,  we propose a detection mechanism in which the detection process is broken down into two main stages. The first stage comprises a conditional ML detection, in which conditioned on $\mathbf{X}$, the receiver detects only the indices of the transmitting LEDs as   follows:  
\begin{equation}
\hat{\ell}=\arg\min_{\tilde{\ell}\in\mathbb{\ell} }\left\Vert \min_{X\in\mathbb{X}} \big[ \big(\mathbf{y}- \frac{\eta}{N_a}{\mathbf{X}}{\mathbf{h_{\tilde\ell}}} \big) \big|\mathbf{X} \big ]\right\Vert^2. 
\label{index}
\end{equation}
In the second stage, the orthogonality of the generated STBC codewords is utilized to decouple the transmitted symbols. Based on this and by assuming perfect detection of the LEDs indices, and therefore perfect estimation of the channel matrix, the transmitted codeword can be evaluated as follows: 
\begin{equation}
\label{eq:signal}
\mathbf{H^T_{\ell_{eff}}}\mathbf{y}= \mathbf{\check{y}}= \norm{\mathbf{h_{\ell}}}^2 \; \frac{\eta}{N_a} \;\mathbf{{I}} \; \mathbf{x} + \mathbf{z},
\end{equation}
where, $\mathbf{x}$ is the transmitted symbols vector, $\mathbf{x}=[x_1, x_2,...x_{\rm N_a}]^T$ and \textcolor{black}{$\mathbf{z} \in \mathbb{R}^{T\times 1}$} is the AWGN vector, in which each component has a zero mean and a variance of $\sigma^2. \norm{\mathbf{h_{\ell}}}^2$. Also,  \textcolor{black}{$\mathbf{H_{\ell_{eff}}} \in \mathbb{R}^{T \times N_{ov}}$ is the effective channel gain matrix for the OSTBC, where $N_{ov}$ denotes the overall number of symbols transmitted in $T$ time slots.} Finally, in order to detect the transmitted symbols,  a conventional signal domain ML detector is utilized to decide on each transmitted symbol separately, namely
\begin{equation}
\hat{x_i}=\arg\min_{\tilde{x_i}\in{\chi}} \bigg| {\check{y_i}- \norm{\mathbf{h_{\ell}}}^2 \; \frac{\eta}{N_a} \tilde{ x_i} } \bigg |^2.
\label{symbol}
\end{equation}
It is also emphasized here that the proposed detector reduces the receiver complexity to $M \times Na+ \lfloor{{N_t \choose N_a}}\rfloor_{2^{p}}$. Furthermore, if the  QOSTBC defined in (\ref{QOSTBC_matrix}) is integrated with SM, the complexity is reduced to $2\times(M \times Na)+ \lfloor{{N_t \choose N_a}}\rfloor_{2^{p}}$.

In what follows, we evaluate the BER performance of the proposed receiver for STBC-SM-based indoor VLC system.
\vspace{-15pt}
\subsection{BER Analysis}
The probability of incorrect detection of the transmitted symbols can be evaluated as
\begin{equation}
P_e =  P_{\mathbf{X}}(\mathbf{X}| \ell \neq \hat{\ell} ) P_{\ell} + P_{\mathbf{X}}(\mathbf{X}|\ell =\hat{\ell}) (1-P_{\ell}),
\label{total_error}
\end{equation}
where $P_{\ell}$ is the probability of incorrect index detection and  $P_{\mathbf{X}}(\mathbf{X}|\ell=\hat{\ell})$ and $P_{\mathbf{X}}(\mathbf{X}| \ell \neq \hat{\ell})$ denote the probability of incorrect codeword detection conditioned on correct and incorrect index detection, respectively. Due to mathematical intractability of (\ref{total_error}) and given that $ P_{\mathbf{X}}(\mathbf{X}| \ell \neq \hat{\ell} )$ is rather large, i.e. close to unity, the total error probability can be simplified as
\begin{equation}
P_e \leqslant  P_{\ell} + P_{\mathbf{X}}(\mathbf{X}|\ell =\hat{\ell})   (1-P_{\ell}).
\label{upper_BER}
\end{equation}

Therefore, obtaining the BER from (\ref{upper_BER}) requires the derivation of two expressions for  $P_{\ell}$ and $ P_{\mathbf{X}}(\mathbf{X}|\ell =\hat{\ell})$. To this effect and given that $P_{\ell}$ is conditioned on $\mathbf{X}$, we resort to the corresponding PEP, which represents an accurate metric to quantify the BER performance  of a wireless system \cite{Lina2019}. 
In fact,  the derivation of the PEP constitutes the fundamental step for the derivation of a union bound on the BER. 

\newtheorem{pro}{Proposition}[section]
\begin{prop}

The following closed-form expression is valid for the  union bound of the BER for $P_{\ell}$: 
\begin{equation}
\centering
P_\ell =  \frac{1}{N 2^N}\sum\limits_{\forall \ell} \sum\limits_{\ell \neq \hat{\ell}} D(\ell ,\hat{\ell}) {Q} \bigg(\sqrt{ \frac{\eta^2}{4N_a^2 \sigma^2} \norm{\mathbf{{X}} \mathbf{h}_{{\ell}} - \mathbf{\hat{X}}_{min} \mathbf{h}_{\hat{\ell}} }^2} \bigg)
\label{union_bound_final}
\end{equation}
where $D(\ell ,\hat{\ell})$ is the number of different  bits between two distinct symbols.
\end{prop}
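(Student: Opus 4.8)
The plan is to first derive the pairwise error probability (PEP) $P(\ell \to \hat{\ell})$ associated with the conditional ML index detector in (\ref{index}), and then to assemble the union bound by summing the PEP over all ordered pairs of index patterns, weighting each term by the Hamming distance between the corresponding bit labels. Since $P_\ell$ in (\ref{upper_BER}) is the probability of incorrect index detection and is inherently conditioned on the transmitted codeword $\mathbf{X}$, the PEP is the natural building block, exactly as motivated in the paragraph preceding the statement.

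First I would write the pairwise error event explicitly. Under the two-step detector, when the true index pattern is $\ell$ the receiver declares $\hat{\ell}$ whenever the index metric of (\ref{index}) evaluated at $\hat{\ell}$ — minimized over the signal-domain codeword, which yields the closest competing codeword $\hat{\mathbf{X}}_{min}$ — does not exceed the metric at the true pair $(\ell,\mathbf{X})$. Substituting the received model $\mathbf{y}=\frac{\eta}{N_a}\mathbf{X}\mathbf{h}_{\ell}+\mathbf{z}$ from (\ref{RX}) into both metrics and expanding the squared norms, the $\norm{\mathbf{z}}^2$ terms cancel and the decision reduces to a single scalar inequality of the form $\mathbf{d}^{T}\mathbf{z}\le -\tfrac12\norm{\mathbf{d}}^2$, where $\mathbf{d}=\frac{\eta}{N_a}\big(\mathbf{X}\mathbf{h}_{\ell}-\hat{\mathbf{X}}_{min}\mathbf{h}_{\hat{\ell}}\big)$ is the deterministic separation between the true and competing noiseless observations. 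Because $\mathbf{z}\sim\mathcal{N}(\mathbf{0},\sigma^2\mathbf{I})$, the projection $\mathbf{d}^{T}\mathbf{z}$ is zero-mean Gaussian with variance $\sigma^2\norm{\mathbf{d}}^2$, so the event probability evaluates to $Q\!\big(\norm{\mathbf{d}}/(2\sigma)\big)$. Inserting $\norm{\mathbf{d}}=\frac{\eta}{N_a}\norm{\mathbf{X}\mathbf{h}_{\ell}-\hat{\mathbf{X}}_{min}\mathbf{h}_{\hat{\ell}}}$ and squaring the argument reproduces exactly the $Q$-function term in (\ref{union_bound_final}).

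With the PEP in hand, I would invoke the standard union bound on the bit error probability: average over the $2^N$ equally likely transmitted index patterns (the factor $1/2^N$), sum the PEP over all competing patterns $\hat{\ell}\neq\ell$, and weight each term by $D(\ell,\hat{\ell})$, the number of differing label bits that confusion induces, finally normalizing by the number of index bits $N$ to convert the symbol-pair count into a per-bit rate. This directly yields the prefactor $1/(N2^N)$ together with the double sum of (\ref{union_bound_final}), with the usual convention that the union bound is reported as a tight approximation.

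The main obstacle is handling the inner signal-domain minimization embedded in the index metric (\ref{index}): unlike a conventional PEP between two fixed constellation points, the competing hypothesis is assessed at the data-dependent minimizing codeword $\hat{\mathbf{X}}_{min}$ rather than at a fixed $\mathbf{X}$. Justifying that the index-confusion event can be represented pairwise through this single closest codeword, and that summing the resulting $Q$-terms still furnishes a valid and tight upper bound, is the delicate step; by contrast, the remaining algebra — the cancellation of $\norm{\mathbf{z}}^2$ and the Gaussian tail evaluation — is routine.
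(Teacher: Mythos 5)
Your proposal is correct and follows essentially the same route as the paper: it forms the Hamming-weighted union bound $P_\ell = \frac{1}{N2^N}\sum_{\forall\ell}\sum_{\ell\neq\hat\ell}D(\ell,\hat\ell)P(\ell\to\hat\ell)$, defines the pairwise error event between the true pattern and a competitor evaluated at $\hat{\mathbf{X}}_{min}$, substitutes the received-signal model, and evaluates the resulting Gaussian tail to obtain $Q\bigl(\sqrt{\tfrac{\eta^2}{4N_a^2\sigma^2}\norm{\mathbf{X}\mathbf{h}_{\ell}-\hat{\mathbf{X}}_{min}\mathbf{h}_{\hat\ell}}^2}\,\bigr)$. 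In fact, your explicit projection argument (reducing the event to $\mathbf{d}^{T}\mathbf{z}\le-\tfrac12\norm{\mathbf{d}}^2$ with $\mathbf{d}^{T}\mathbf{z}\sim\mathcal{N}(0,\sigma^2\norm{\mathbf{d}}^2)$) supplies the algebra the paper compresses into ``after some algebraic manipulations,'' and your flagged concern about the data-dependent $\hat{\mathbf{X}}_{min}$ is a limitation the paper shares rather than resolves.
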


\begin{proof}
With the aid of  PEP, the upper bound to the probability of a union of all events of decoding  $\ell$ as $\hat{\ell}$ can be represented as 
\begin{equation}
\centering
P_\ell =  \frac{1}{N 2^N}\sum\limits_{\forall \ell} \sum\limits_{\ell \neq \hat{\ell}} D(\ell ,\hat{\ell}) P(\ell \longrightarrow \hat{\ell}), 
\label{union_bound_definition}
\end{equation}
where $P(\ell \longrightarrow \hat{\ell})$ is the PEP for index detection, which can be written as   

\begin{equation}
\centering
P(\ell \longrightarrow \hat{\ell}) = P\bigg( \norm{\mathbf{y}-\frac{\eta}{N_a} \mathbf{\hat{X}}_{min} \mathbf{h}_{\hat{\ell}} }^2 \leqslant \norm{\mathbf{y}-\frac{\eta}{N_a} \mathbf{{X}}_{min} \mathbf{h}_{{\ell}} }^2 \bigg). 
\label{PEP_not_simplified1}
\end{equation}
After substituting the received signal $\mathbf{y}$ from (\ref{index}) in (\ref{PEP_not_simplified1}), the PEP can be written as 
\begin{equation}
\centering
P(\ell \longrightarrow \hat{\ell}) = P\bigg( \norm{\frac{\eta}{N_a} \mathbf{{X}}\mathbf{h}_{\ell}+ \mathbf{n}-\frac{\eta}{N_a} \mathbf{\hat{X}}_{min} \mathbf{h}_{\hat{\ell}} }^2 \leqslant \norm{\mathbf{n} }^2 \bigg). 
\label{PEP_not_simplified2}
\end{equation}
Based on this and after some   algebraic manipulations,  the PEP can be expressed in closed form  in terms of the one dimensional Gaussian $Q-$function, namely
\begin{equation}
\small
P(\ell \longrightarrow \hat{\ell}) = {Q} \bigg(\sqrt{ \frac{\eta^2}{4N_a^2 \sigma^2} \norm{\mathbf{{X}} \mathbf{h}_{{\ell}} - \mathbf{\hat{X}}_{min} \mathbf{h}_{\hat{\ell}} }^2} \bigg). 
\label{PEP_simplified}
\end{equation}
Finally, the obtained PEP in (\ref{PEP_simplified}) can be substituted in (\ref{union_bound_definition}), which yields the closed form representation  in (\ref{union_bound_final}), which completes the proof. 
\end{proof}
By leveraging the orthogonality feature of the STBC codewords defined in III.A, transmitted symbols can be decoded separately using the ML detector in (\ref{symbol}). Therefore, the BER expression of M-ary PAM can be utilized in evaluating the probability of incorrectly detecting $\mathbf{X}$, namely \cite{Ghassem2019}
\begin{equation}
 P_{\mathbf{X}}(\mathbf{X}|\ell =\hat{\ell}) = \frac{2 (M-1)}{M \log_2(M)}Q\bigg(\sqrt{\frac{\eta^2 \norm{\mathbf{ h}_l}^2 I^2_{p}}{N^2_a(M+1)^2 \sigma^2}}\bigg).
 \label{BER_codeword}
 \end{equation}

\begin{figure*}[ht]
\centering
\begin{dmath}
\label{final_expression} 
 P_e \leqslant   \frac{1}{N 2^N}\sum\limits_{\forall \ell} \sum\limits_{\ell \neq \hat{\ell}} D(\ell ,\hat{\ell}) {Q} \bigg(\sqrt{ \frac{\eta^2}{4N_a^2 \sigma^2} \norm{\mathbf{{X}} \mathbf{h}_{{\ell}} - \mathbf{\hat{X}}_{min}  \mathbf{h}_{\hat{\ell}} }^2} \bigg) + \frac{2 (M-1)}{M \log_2(M)}Q\bigg(\sqrt{\frac{\eta^2 \norm{ \mathbf{h}_l}^2 I^2_{p}}{N^2_a(M+1)^2 \sigma^2}}\bigg) \times 
 \left[1- \frac{1}{N 2^N}\sum\limits_{\forall \ell} \sum\limits_{\ell \neq \hat{\ell}} D(\ell ,\hat{\ell}) {Q} \bigg(\sqrt{ \frac{\eta^2}{4N_a^2 \sigma^2} \norm{\mathbf{{X}}  \mathbf{h}_{{\ell}} - \mathbf{\hat{X}}_{min}  \mathbf{h}_{\hat{\ell}} }^2} \bigg)\right]. 
\end{dmath}
\hrulefill
\end{figure*}
Therefore, with the aid of (\ref{upper_BER}), (\ref{union_bound_final}) and (\ref{BER_codeword}), the overall  BER upper bound is given by the explicit expression in  (\ref{final_expression}), at the top of the next page. 

\subsection{Achievable Rate Analysis for MASM and STBC-SM}

Motivated by the advantages of MASM in improving the system's spectral efficiency, in this section we derive the  rate expression for MASM and STBC-SM schemes and compare them in terms of their achievable spectral efficiency.
To this end, it is recalled that the spectral efficiency for MASM depends on the activation patterns and the modulation size, namely  
\begin{equation}
\eta_{masm} = N_{a}  {\log}_{2}(M)+ \log_{2}\bigg\lfloor{{N_t \choose N_a}}\bigg\rfloor_{2^{p}}.
\end{equation}
Based on this and given that the rate of correctly detected bits experiences severe degradation  for the case of a large number of active LEDs and modulation size, the achievable rate of MASM can be evaluated in terms of the error probability  as  
\begin{equation}
R_{masm} = (1-P^{masm}_e) \times \eta_{masm}, 
\label{rate_MA_SM}
\end{equation}
where $P^{masm}_e$ is the BER for a MASM system, which is expected to be particularly high for large $M$ values. Following the same steps,  the achievable rate of STBC-SM is given by 
\begin{equation}
R_{stbc-sm} = (1-P_e) \times \eta_{stbc-sm}, 
\label{rate_STBC_sm}
\end{equation}
where $\eta_{stbc-sm}$ is the spectral efficiency of STBC-SM,  which can be expressed as
\begin{equation}
\eta_{stbc-sm} = \frac{{R\times N_a \log}_{2}(M)+\log_2(\lfloor{{N_t \choose N_a}}\rfloor_{2^{p}})}{T} 
\label{SE_STBC}
\end{equation}
 Therefore, it is noted that since  STBC-SM scheme achieves a better BER performance for high $M$ values compared to MASM and RC-SM,  STBC-SM  outperforms both schemes in terms of the corresponding BER, and hence the overall achievable rate. In addition, since the proposed QOSTBC   in (\ref{QOSTBC_matrix}) has a rate of 2, when integrated with SM it will require lower modulation size compared to RC-SM to achieve the same spectral efficiency. This will be reflected on the error rate performance at high spectral efficiency values, as  it is demonstrated in the next section.

It is noted here that evaluating the error rate performance with respect to the average electrical received signal-to-noise ratio (SNR) would disregard the individual path loss of the different setups and activated LEDs pattern. Therefore, in order to guarantee fair comparison, we opt to evaluate the error rate performance of the proposed scheme with respect to the transmit SNR, which is defined as the ratio of the average symbol energy against the noise power
spectral density \cite{fath2013,fath_20132}. 
It is also noted that the value of the transmit SNR in VLC systems is considerably higher than the one encountered in typical RF communication systems as it often exceeds 100 dB \cite{He2015}. This is attributed to the small value associated with the noise power spectral density $N_0$ \cite{Yin2016}. Specifically, by neglecting the photodetector dark current, $N_0$ is expressed as $N_0 \simeq q\;I_B $, where $q=1.6e{-19}$ is the charge of electron and $I_B$ is the background noise current that takes typical values in the order of $\mu A$ \cite{Stavridis_2015,Ghassem2019}.

\section {Numerical Results} \label{sec:models}

In this section, we capitalize on the offered results to analyze the achievable performance of the proposed STBC-SM scheme. Hence, the validity of the derived analytic expressions in (\ref{final_expression}), (\ref{rate_MA_SM}) and (\ref{rate_STBC_sm}) is first justified through extensive comparisons with respective results from computer simulations. To that end and without loss of generality, we consider an indoor VLC environment with dimensions  of $5\times5\times4\;m^3$, in which the locations of both the involved LEDs and the user are illustrated in Fig. \ref{Fig:room}. 
In addition, it is assumed that the transmit LEDs radiate downward from the ceiling to the floor. On the contrary, the receiver position in the room is considered to be fixed throughout simulations and is assumed to be located   at  height of 0.8 m from the floor. Based on this, the corresponding channel gain can be also  considered   fixed.  The receiver is also assumed to be oriented upward to the ceiling. Moreover, the LEDs semi-angle at half power $\varphi_{1/2}$ and the FoV of the PD are set to $60^o$, which is a typical value for commercially-available high-brightness LEDs. For convenience, all the parameters  involved  in our simulations are summarized in Table \ref{TableI}.
\begin{figure}[ht]
\centering
\includegraphics[width=250pt, height = 6cm]{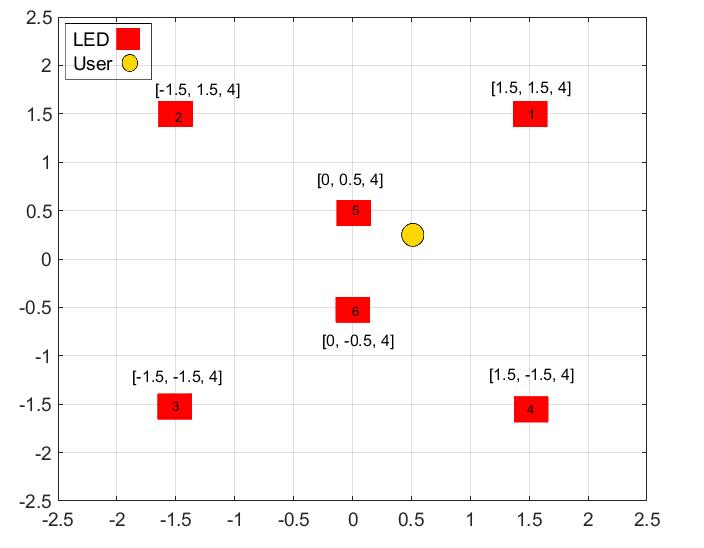}
\caption{ LEDs and user placements in $5\times5\times4\;m^3$ room. }
\label{Fig:room}
\end{figure}

\begin{table}[ht]
\scriptsize
\centering
 \caption{Simulation parameters.}
\begin{tabular} [c]{||p{3cm}|p{1.5cm}|p{2cm}||}
\hline 
\textbf{Parameter}&\textbf{Symbol}&\textbf{Value} \\ \hline  \hline
Room dimensions&- &$5\times5\times4 m^3$ \\ \hline
LED beam angle& $\varphi_{1/2}$ &	$60^o$\\ \hline
PD area& $A$ &	$1\; cm^2$\\ \hline
Refractive index of PD& $n$ & $1.5$\\ \hline
Gain of optical filter&	$T_s(\phi_{i})$  &1\\ \hline
FoV of PD & $\phi_c$	& $60^o$\\ \hline
PD responsitivity& $\eta$ &	$1 A/W$\\ \hline
\end{tabular}
\label{TableI}
\end{table}

\begin{figure}[ht]
\centering
\includegraphics[width=265pt, height = 7cm]{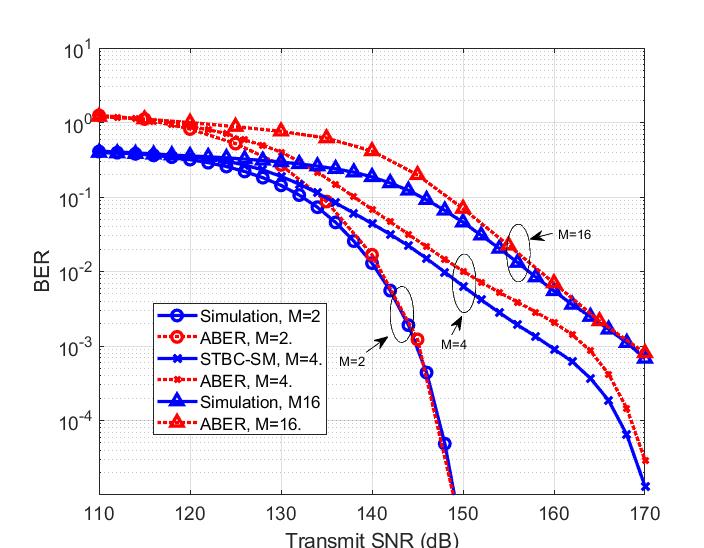}
\caption{ Comparison between the derived analytical results and simulations. $N_T = 6$, $N_a = 2$, and $N_r=1$. }
\label{analytical}
\end{figure}
Fig. \ref{analytical} shows the achievable analytic and simulated  BER  versus the transmit  SNR for different modulation size with $N_T = 6, N_a=2$ and $N_r = 1$. It is noticed that the derived formula in (\ref{final_expression}) forms an upper bound for the BER, which is tight at high SNR values for various modulation size. Additionally, STBC-SM enjoys a good performance in the range of interest of the transmit SNR for VLC system.

Next, a comparison between MASM and STBC-SM systems is considered in terms of the BER performance. The results are demonstrated  in Fig. \ref{Na_2}, where it is shown that  integrating STBC with SM provides a significant BER improvement for large modulation size $M$. On the contrary, the illustrated performance of the corresponding  MASM counterpart exhibits a deteriorated performance at high $M$ values. For instance, for the case of $M=2$, more than 20 dB is required to achieve a BER of $10^{-3}$ using MASM compared to STBC-SM.
Additionally, STBC-SM user exhibits a good BER performance in the typical transmit SNR range for VLC system,  where the received SNR in this case has an offset of 120 dB,  since the channel gain is in the order of $10^{-6}$.
 \begin{figure}[ht]
\centering
\includegraphics[width=265pt, height = 7cm]{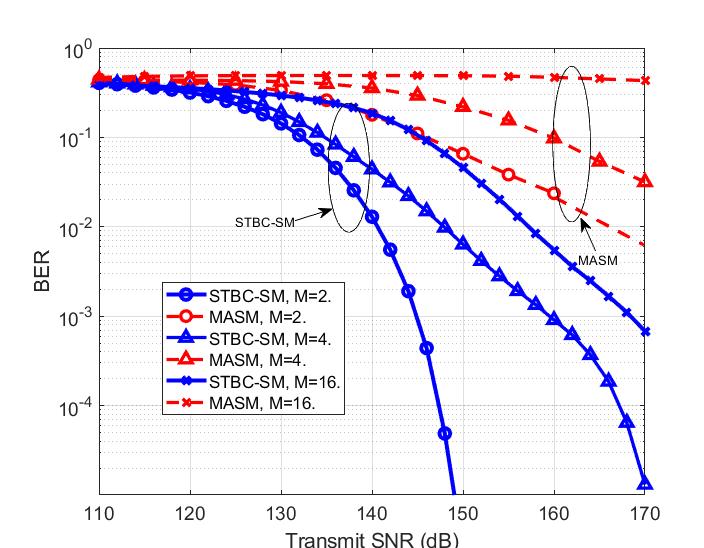}
\caption{ BER performance comparison versus transmit SNR. $N_t = 6$, $N_a = 2$, and $N_r=1$.}
\label{Na_2}
\end{figure}

For fair comparison, we also consider the BER performance for both schemes for a fixed spectral efficiency $\eta$ of 4 bpcu and 5 bpcu,  as depicted in Fig. \ref{SE}. We observe that, STBC-SM provides a reliability enhancement of the considered VLC systems compared to MASM, in the typical SNR range. For instance, for $\eta$ = 4 bpcu at transmit SNR = 160 dB, the BER performance for STBC-SM and MASM is $10^{-4}$ and $10^{-3}$, respectively i.e. a difference of around one order of magnitude.  Moreover, a comparison in terms of the achievable throughput, as defined in (\ref{rate_MA_SM}) and (\ref{rate_STBC_sm}),  for both schemes is illustrated in Fig. \ref{Fig:rate1}. It is noted that in the typical SNR range, the throughput gap between the two schemes decreases as the modulation size increases, which is due to the high BER that occurs in the case of MASM. Meanwhile, using STBC with SM  improves the BER performance,  which is reflected on the achievable data rate as seen from (\ref{rate_STBC_sm}). 

\begin{figure}[ht]
\centering
\includegraphics[width=265pt, height = 7cm]{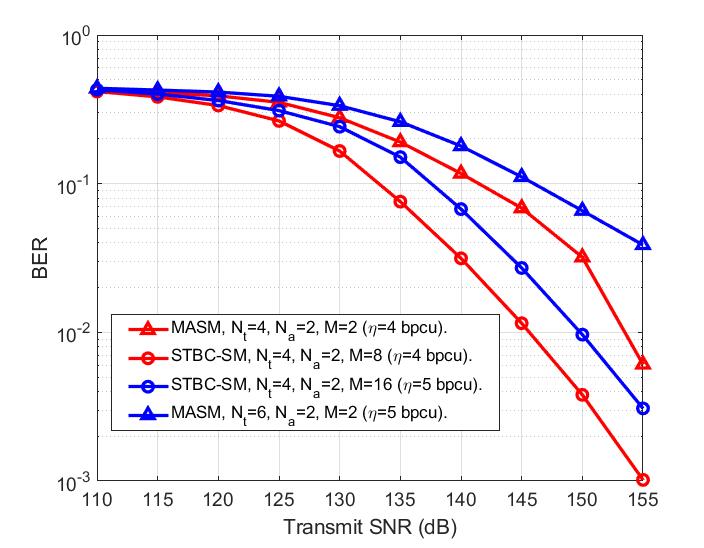}
\caption{BER performance comparison between STBC-SM and MASM for spectral efficiencies of 4 bpcu and 5 bpcu. Red lines for 4 bpcu, blue lines for 5 bpcu.}
\label{SE}
\end{figure}
\begin{figure}[ht]
\centering
\includegraphics[width=265pt, height = 7cm]{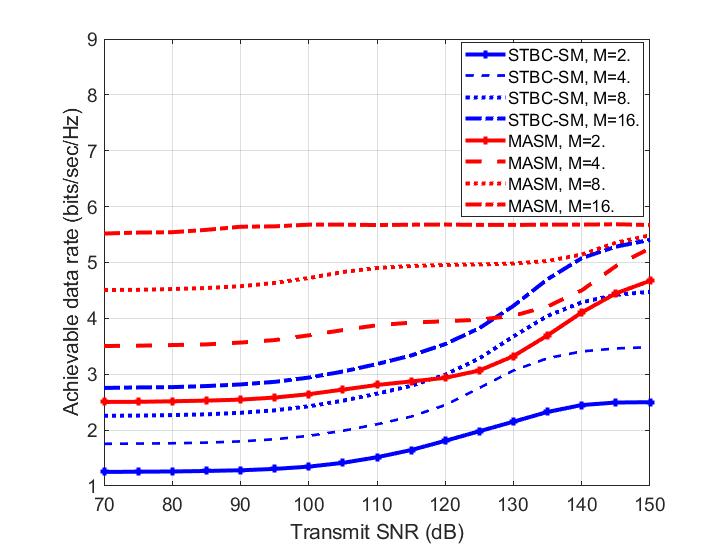}
\caption{The achievable rate performance comparison between STBC-SM and MASM versus transmit SNR. $ N_t = 6$, $N_a = 2$, and $N_r=1$ .}
\label{Fig:rate1}
\end{figure}

In order to study the effect of  various system parameters, namely $N_t, N_a,$ and $M$, on the overall system performance we summarize the BER and the throughput performances for both STBC-SM and MASM schemes  in Table \ref{tableII}. It can be noticed that the proposed scheme provides a significant enhancement on the BER performance compared to MASM for different values of the involved parameters. For instance, for the the case of $N_t=6, N_a=3,$ and $M=2$, the percentage difference in the BER is almost $99.9999 \%$ compared to MASM. Additionally, it is noted that the BER performance of MASM is degraded as $N_a$ increases which is attributed to the increased  spatial interference between the active LEDs.  Moreover, for fixed  $N_a$ and varying $N_t$, the BER performance for MASM drops noticeably because of the decreased separation and increased correlation between the LEDs. A similar remark can be also made for the case of increasing the modulation order $M$. On the contrary, STBC-SM is more robust to the change in the involved parameters in terms of the achievable BER.  An interesting remark can be also concluded from the last two settings, i.e., $N_t=7$ and $N_t=8$. Specifically, even though we obtain the same spectral efficiency from the two configurations for the same values of $N_a$ and $M$, it can be noticed that using lower $N_t$ would results in a lower BER values. This is attributed to the increased separation, and hence, reduced interference between LEDs. However, in terms of throughput, it can be seen that MASM outperforms STBC-SM particularly for low modulation size. For instance, the percentage difference between the two schemes for $N_t=6, N_a=4,$ and $M=2$ is $68.9655\%$. Therefore, it is concluded that the BER related  benefit of the proposed scheme is much greater compared to the throughput loss.
Also, even though MASM outperforms STBC-SM in the achieved throughput, it is clear from the setting $N_t=6, N_a=2,$ and $M=16$ that the  gap difference in the throughput decreases between both schemes for large modulation orders $M$. For instance, the percentage difference in terms of throughput for this case is $3.5088\%$ relative to the MASM, which is very small compared to the case of $N_t=6, N_a=2,$ and $M=4$, where the percentage difference is $45.3125\%$. Therefore, this renders the proposed scheme more suitable for setups where  the room has limited number of LEDs and high modulation order $M$. It is also noted here that using high-rate QOSTBC can assist towards  reducing the throughput gap if it is integrated with SM.
\begin{table*}[ht]
\centering
\caption{BER and throughput for different configurations } 
\begin{small}
\begin{tabular}{||c|c|c||c|c|c||c|c|c||}
\hline
\multicolumn{3}{||c||}{Involved Parameters} & \multicolumn{3}{|c||}{BER} &\multicolumn{3}{c||}{Throughput} \\
\hline
$N_t$ & $N_a$  & $M$ &STBC-SM & MASM &$\Delta \%$&$R_{stbc-sm}$& $R_{sm}$ & $\Delta \%$\\
\hline \hline $6$&$2$&$2$&$\approx 0$ & $0.024$ & $100$ & $2.5$ & $4.8 $ & $47.9167 $\\
\hline $6$&$3$&$2$&$ 2.4\times 10^{-9}$&$ 0.133$ & $99.9999 $ & $2.3 $& $ 6$ & $61.6667 $ \\
\hline $6$&$4$&$2$&$3.4 \times 10^{-6} $&$0.171 $ & $99.9980 $ & $ 1.8$ & $5.8 $ & $ 68.9655$\\
\hline \hline 
$6$&$2$&$2$&$ \approx  0$ &$0.024 $& $100 $ & $ 2.5$ & $4.8 $ & $47.9167 $\\
\hline $6$&$2$&$4$&$9.1 \times 10^{-4}$&$ 0.084$& $98.9167 $ & $ 3.5 $ & $ 6.4$ & $ 45.3125$ \\
\hline $6$&$2$&$16$&$0.005$ & $0.475$ &$98.9474 $  &  $ 5.5$ & $5.7 $ & $3.5088 $ \\
\hline \hline 
$4$&$2$&$2$&$\approx 0 $ &$8.9 \times 10^{-5} $ & $ 100$ & $ 2$ &  $4 $ & $ 50$\\
\hline $6$&$2$&$2$&$\approx  0 $&$ 0.024$ & $100 $ & $2.5 $ & $4.8 $ & $ 47.9167$\\
\hline $7$&$2$&$2$&$7.5\times 10^{-9}$&$0.065 $ & $99.9999 $ &$ 3$ & $ 5.6$& $46.4286 $ \\
\hline $8$&$2$&$2$&$ 1.3 \times 10^{-6} $&$0.079 $ & $99.9983 $ & $2.99 $& $ 5.5$ & $45.6363 $\\
\hline \hline
\end{tabular}
\end{small}
\label{tableII}
\end{table*}

It is well known that the performance of   indoor VLC networks is very sensitive to the placement of the LEDs in the room. Hence, Fig.\ref{Fig:seperation} demonstrates the effect of varying the horizontal and the vertical separations between LEDs 1, 2, 3, and 4. It can be seen from the figure that there exists an optimum separation between the LEDs where both schemes achieve the best performance. This is explained as follows: the effect of the correlation is very high when the LEDs are very close to each other, which increases  the error rate in the index detection. On the contrary, when the LEDs are placed considerably  far from each other,  the amount of the received power from each active LED is ultimately reduced. Therefore, an optimum placement of the LEDs which guarantees low correlation of the involved wireless channels and acceptable received amount of power is  necessary. It is also noted that the proposed system is more robust to the LEDs' placement and achieves a good BER performance for various separation values. 
\begin{figure}[ht]
\centering
\includegraphics[width=270pt, height = 7cm]{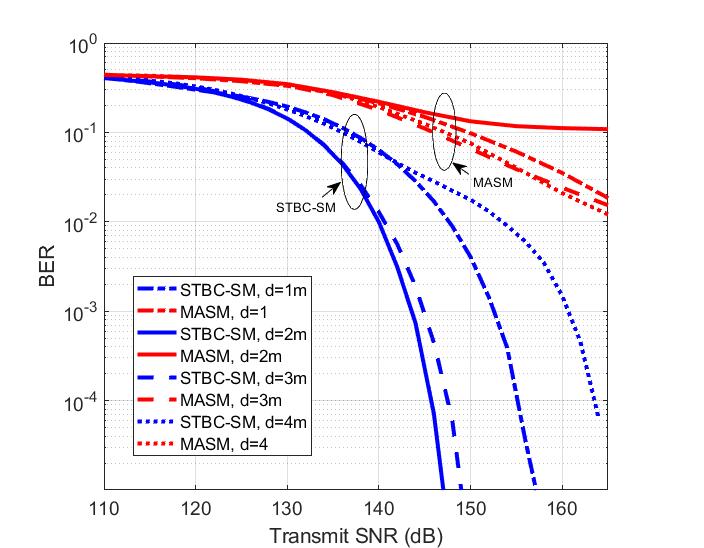}
\caption{Effect of varying the horizontal and the vertical separation $d$ between LEDs 1, 2, 3, and 4. $ N_t = 6$, $N_a = 2$, $N_r=1$, and $M=2$.}
\label{Fig:seperation}
\end{figure}\\ \indent
Beside this, we present the effect of varying the half-power semi-angle ($\varphi_{1/2}$) of the LEDs on the BER performance for different values of transmit SNR, while fixing the other parameters. Fig. \ref{Fig:varphi} shows the simulated and analytical BER versus the semi-angle for $N_t=6$, $N_a=2$, $M=2$, and SNR values of 130 dB, 135 dB, and 140 dB. As expected, the MIMO system performance in VLC is highly dependent upon the value of $\varphi_{1/2}$ of the LEDs. As depicted in the figure, by fixing all other system parameters, the performance ultimately improves as the angle decreases due to the improved channel gains and the reduced correlation between different channels. 
\begin{figure}[ht]
\centering
\includegraphics[width=270pt, height = 7cm]{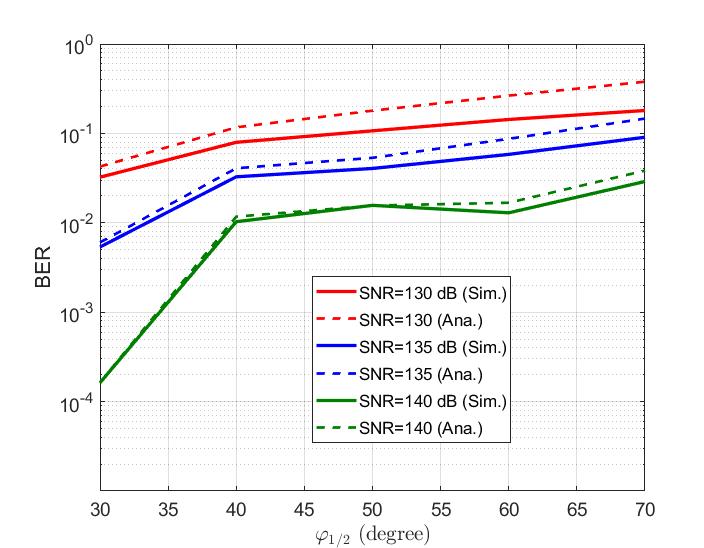}
\caption{Effect of varying $\varphi_{1/2}$ for different transmit SNR. $N_t = 6$, $N_a = 2$, $N_r=1$, and $M=2$.}
\label{Fig:varphi}
\end{figure}

 Finally, we study the BER performance of the proposed high-rate QOSTBC-SM and compare its  achievable performance with respect to MASM and RC-SM based schemes for different spectral efficiency values, i.e.,  $\eta$ = 5 bpcu, 6 bpcu, and 7 bpcu. It can be seen from Fig. \ref{Fig:QOSTBC_vs_MASM} that integrating QOSTBCs with SM enhances the system performance in terms of the achievable BER compared to MASM and RC-SM for fixed spectral efficiency values. This performance improvement can be noticed more profoundly at higher spectral efficiency values. For instance, a $\eta$ = 7 bpcu spectral efficiency value was realized through QOSTBC-SM by choosing the system setup as follows: $N_t=4, N_a=2,$ and $M=8$, whereas the achieved error rate at SNR $=170$ dB is $1.6\times 10^{-3}$. 
 On the contrary, in order to achieve the same spectral efficiency using RC-SM, the system setup was chosen as, $N_t=4, N_a=2,$ and $M=32$. However,  the achieved error rate in this case considering the same SNR is $1.4\times10^{-2}$. The need for higher $M$-ary modulation values while fixing other system setup parameters explains the error rate performance gap between the two schemes at higher $\eta$ values. However, at lower spectral efficiency values, i.e., 5 bpcu, RC-SM achieves better performance compared to both QOSTBC-SM and MASM. Besides this, since the proposed QOSTBC matrix has a rate of $R=2$, it provides a spectral efficiency enhancement when integrated with SM compared to OSTBC-SM. Yet, this spectral efficiency enhancement comes at the expense of a slightly increased receiver complexity, which as already mentioned it is tolerable. 
\begin{figure}[ht]
\centering
\includegraphics[width=270pt, height = 7cm]{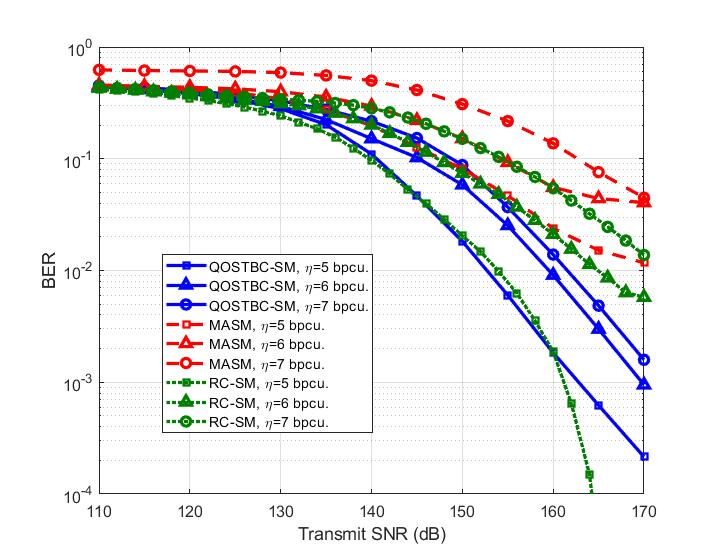}
\caption{BER performance comparison between the proposed QOSTBC-SM, MASM, and RC-SM. $\eta$ = 5 bpcu, 6 bpcu, and 7 bpcu.}
\label{Fig:QOSTBC_vs_MASM}
\end{figure}

\section {Conclusion} 
In this paper, we considered the performance of a high-rate low complexity MIMO transmission  STBC-SM scheme for VLC systems. A general technique has been proposed  based on constructing any STBC-SM scheme for any $M$-ary PAM size and any number of transmitting LEDs. It has been shown through out simulations  and derived upper bounds that STBC-SM is a promising MIMO technique for indoor VLC system as it offers  improved system BER performance, throughput and reduced receiver complexity compared to MASM and RC-SM. To achieve that, part of the bit stream was used to activate a group of the available LEDs, while the other part of the bit stream is conveyed through space time coded intensity modulation. Furthermore, it has been shown through extensive simulations that STBC-SM maintains a rather good performance for various  setups, whilst an improved throughput is achieved compared to MASM as the modulation size increases. 

\bibliographystyle{IEEEtran}
\balance 
\bibliography{ref}

\begin{IEEEbiography}
[{\includegraphics[width=1in,height=1.25in,clip,keepaspectratio]{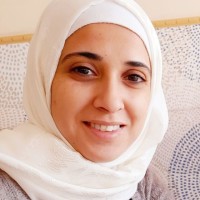}}]
{Shimaa Naser} received the M.Sc. degree in Electrical Engineering from the University of Science and Technology, Irbid, Jordan in 2015. Since 2018, she is pursuing her Ph.D. degree at the Department of Electrical and Computer Engineering at Khalifa University, Abu Dhabi, UAE. Her research interests  include  advanced  digital  signal  processing and modulation techniques for visible light communications, MIMO-based communiaction, and orthogonal/non-orthogonal  multiple  access.
\end{IEEEbiography}

\begin{IEEEbiography}
[{\includegraphics[width=1in,height=1.25in,clip,keepaspectratio]{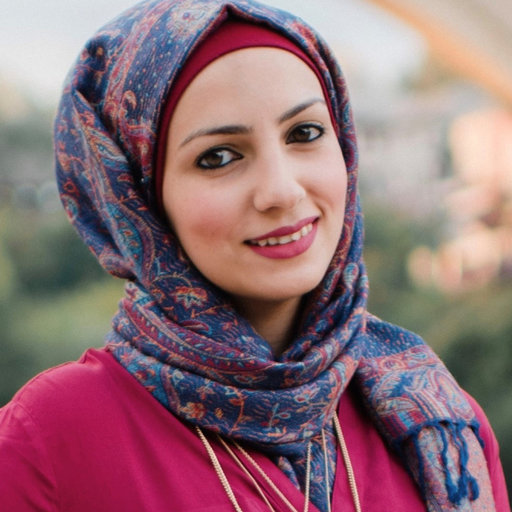}}]
{Lina Bariah} (S’13-M’19) received the M.Sc. and Ph.D degrees in communications engineering from Khalifa University, Abu Dhabi, United Arab Emirates, in 2015 and 2018. She is currently a Postdoctoral fellow with the KU Center for Cyber-Physical Systems, Khalifa University, UAE. She was a Visiting  Researcher with the Department of Systems and Computer Engineering, Carleton University, ON, Canada, in summer 2019. Her research interests include advanced digital signal processing techniques for communications, channel estimation, cooperative communications, non-orthogonal multiple access, cognitive radios, and intelligent surfaces.

\end{IEEEbiography}

\begin{IEEEbiography}
[{\includegraphics[width=1in,height=1.25in,clip,keepaspectratio]{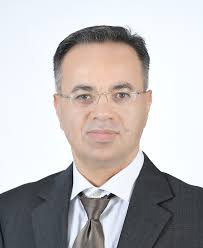}}]
{Sami Muhaidat} (S’01-M’08-SM’11) received the Ph.D. degree in electrical and computer engineering from the University of Waterloo, Waterloo, ON, Canada, in 2006. From 2007 to 2008, he was an NSERC Post-Doctoral Fellow with the Department of Electrical and Computer Engineering, University of Toronto, ON, Canada. From 2008 to 2012, he was Assistant Professor with the School of Engineering Science, Simon Fraser University, Burnaby, BC, Canada. He is
currently an Associate Professor with Khalifa University, Abu Dhabi, UAE, and a Visiting Professor with the Department of Electrical and Computer Engineering, University of Western
Ontario, London, ON, Canada. He is also a Visiting Reader with the Faculty of Engineering, University of Surrey, Guildford, U.K. Dr. Muhaidat currently serves as an Area Editor of the IEEE TRANSACTIONS ON COMMUNICATIONS, and he was previously a Senior Editor of the IEEE COMMUNICATIONS LETTERS and an Associate Editor of IEEE TRANSACTIONS ON COMMUNICATIONS, IEEE COMMUNICATIONS LETTERS and IEEE TRANSACTIONS ON VEHICULAR TECHNOLOGY. He was a recipient of several scholarships during his undergraduate and graduate studies and the winner of the 2006 NSERC PostDoctoral Fellowship Competition. Dr Muhaidat is a Senior Member IEEE.
\end{IEEEbiography}

\begin{IEEEbiography}
[{\includegraphics[width=1in,height=1.25in,clip,keepaspectratio]{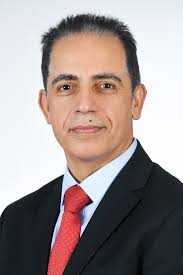}}]
{Mahmoud Al-Qutayri} (S'87, M'92, SM'02) is the Associate Dean for Graduate Studies – College of Engineering, and a Professor of Electrical and Computer Engineering at Khalifa University, UAE. He received the B.Eng., MSc and PhD degrees from Concordia University, Canada, University of Manchester, U.K., and the University of Bath, U.K., all in Electrical and Electronic Engineering in 1984, 1987, and 1992, respectively. Prior to joining Khalifa University, he was a Senior Lecturer at De Montfort University, UK. This was preceded by a Research Officer appointment at University of Bath, UK. He has published numerous technical papers in peer reviewed international journals and conferences. He coauthored a book as well as a number of book chapters. His main fields of research include embedded systems design, applications and security, design and test of mixed-signal integrated circuits, wireless sensor networks, and cognitive wireless networks. During his academic career, Dr. Al-Qutayri made many significant contributions to both undergraduate as well as graduate education. His professional service includes membership of the steering, organizing and technical program committees of many international conferences.
\end{IEEEbiography}

\begin{IEEEbiography}
[{\includegraphics[width=1in,height=1.25in,clip,keepaspectratio]{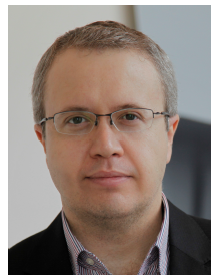}}]
{Murat Uysal} (Fellow, IEEE) received the B.Sc. and the M.Sc. degree in electronics and communication engineering from Istanbul Technical University, Istanbul, Turkey, in 1995 and 1998, respectively, and the Ph.D. degree in electrical engineering from Texas A\&M University, College Station, Texas, in 2001. He is currently a Full Professor and Chair of the Department of Electrical and Electronics Engineering at Özyeğin University, Istanbul, Turkey. He also serves as the Founding Director of Center of Excellence in Optical Wireless Communication Technologies (OKATEM). Prior to joining Özyeğin University, he was a tenured Associate Professor at the University of Waterloo, Canada. Dr. Uysal's research interests are in the broad area of communication theory with a particular emphasis on the physical layer aspects of wireless communication systems in radio and optical frequency bands. On these topics, he has authored some 350 journal and conference papers and received more than 14.000 citations with an h-index of 56.
Prof. Uysal is an IEEE Fellow and the former Chair of IEEE Turkey Section. He currently serves on the editorial board of IEEE Transactions on Wireless Communications. In the past, he served as an Editor for IEEE Transactions on Communications, IEEE Transactions on Vehicular Technology, and IEEE Communications Letters. He was involved in the organization of several IEEE conferences at various levels. In particular, he served as the Technical Program Committee Chair of major IEEE conferences including WCNC 2014, PIMRC 2019 and VTC-Fall 2019.
Prof Uysal’s major distinctions include NSERC Discovery Accelerator Award in 2008, University of Waterloo Engineering Research Excellence Award in 2010, Turkish Academy of Sciences Distinguished Young Scientist Award in 2011, Özyeğin University Best Researcher Award in 2014, National Instruments Engineering Impact Award in 2017, Elginkan Foundation Technology Award in 2018 and IEEE Communications Society Best Survey Paper Award in 2019 among others.

\end{IEEEbiography}

\begin{IEEEbiography}
[{\includegraphics[width=1in,height=1.25in,clip,keepaspectratio]{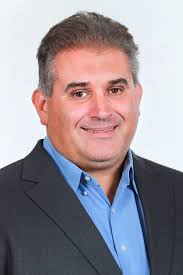}}]
{Paschalis C. Sofotasios} (S’07-M’12-SM’16) was born in Volos, Greece, in 1978. He received the M.Eng. degree from Newcastle University, U.K., in 2004, the M.Sc. degree from the University of Surrey, U.K., in 2006, and the Ph.D. degree from the University of Leeds, U.K., in 2011. He has held academic positions at the
University of Leeds, U.K., University of California at Los Angleles, CA, USA, Tampere University of Technology, Finland, Aristotle University of
Thessaloniki, Greece and Khalifa University of Science and Technology, UAE, where he currently serves as Assistant Professor in the department of Electrical Engineering and Computer Science.
His M.Sc. studies were funded by a scholarship from UK-EPSRC and his Doctoral studies were sponsored by UK-EPSRC and Pace plc. His research interests are in the broad areas of digital and optical wireless communications as well as in topics relating to special functions and
statistics. Dr. Sofotasios serves as a regular reviewer for several international journals and has been a member of the technical program
committee of numerous IEEE conferences. He currently serves as an Editor for the IEEE COMMUNICATIONS LETTERS and he received the Exemplary Reviewer Award from the IEEE COMMUNICATIONS LETTERS
in 2012 and the IEEE TRANSACTIONS ON COMMUNICATIONS in 2015 and 2016. Dr. Sofotasios is a Senior Member IEEE and he received the Best Paper Award at ICUFN 2013. 
\end{IEEEbiography}

\end{document}